\newenvironment{proof}[1][Proof]{\noindent\textbf{#1.} }{\ \rule{0.5em}{0.5em}}
\newcommand{\vv}{\pmb{v}}
\newcommand{\uu}{\pmb{u}}
\newcommand{\C}{\mathcal{C}}\usepackage{amsmath}
\newtheorem{theorem}{Theorem}
\newtheorem{corollary}[theorem]{Corollary}
\newtheorem{definition}[theorem]{Definition}
\newtheorem{example}[theorem]{Example}
\newtheorem{lemma}[theorem]{Lemma}
\newtheorem{proposition}[theorem]{Proposition}
\newtheorem{remark}[theorem]{Remark}
\begin{document}

\title{Robust low-delay Streaming PIR using convolutional codes 
}


\author{Julia Lieb,
 Diego Napp
 and~Raquel Pinto
\thanks{Julia Lieb is at the Department of Mathematics, University of Zurich, Switzerland e-mail: julia.lieb@math.uzh.ch.}
\thanks{Diego Napp is at the Department of Mathematics, University of Alicante, Spain
		 e-mail: diego.napp@ua.es.}
\thanks{Raquel Pinto is at the Department of Mathematics, University of Aveiro, Portugal e-mail: raquel@ua.pt.}

}


%



\markboth{Journal of \LaTeX\ Class Files,~Vol.~-, No.~-, -~201-}%
{Shell \MakeLowercase{\textit{et al.}}: Bare Demo of IEEEtran.cls for IEEE Journals}
%

\maketitle

\begin{abstract}

In this paper we investigate the design of a low-delay robust streaming PIR scheme on coded data that is resilient to unresponsive or slow servers and can privately retrieve streaming data in a sequential fashion subject to a fixed decoding delay. We present a scheme based on convolutional codes and the star product and assume no collusion between servers. In particular we propose the use of convolutional codes that have the maximum distance increase, called Maximum Distance Profile (MDP). We show that the proposed scheme
can deal with many different erasure patterns.
\begin{IEEEkeywords}
Private Information Retrieval, Private Streaming, Convolutional codes \and low-delay \and MDP codes \and Erasure channel.
\end{IEEEkeywords}
\end{abstract}

\section{Introduction}

Video traffic has had an explosive growth and it is expected to keep its exponential growth in the coming years \cite{Cisco2017}. Service providers for real-time video streaming are typically hosted in a public cloud, with multiple servers in different data centers, e.g. Google Cloud, Amazon CloudFront and Microsoft Azure. These cloud services aim for private and low-latency communications.

The problem of Private Information Retrieval (PIR) has attracted a lot of attention in the recent years and studies how to retrieve a file from a storage system without revealing the desired file to the servers. It was initially addressed for replicated files \cite{Chor1998} and recently for coded files \cite{BaUl2018,HollandiPIR2017,UmbertoISIT2019,Shah2014,TaRo2016}. In this last setting, the general model of the information theoretic PIR problem is as follows. Let a coded database be contained in $n$ servers storing $m$ files and assume that the user knows the content of the servers. Each file is coded and stored independently using the same code and the user wants to retrieve a particular file from the database with zero information gain from the servers, i.e., the user wants information theoretic privacy \cite{TaRo2017}. Recently, the literature on PIR models has grown considerably with extensions for more general PIR models with several additional constraints. Most of the efforts in private retrieval have focused on efficient schemes that optimize different metrics, such as communication cost or rate. However, in many cases some of the servers may be busy and do not respond within a desirable time frame or network failures may occur. For this reason, new robust schemes were proposed in order to deal with such scenarios \cite{TaRo2017,Hollanti2019} adding redundancy to tolerate certain missing files from the servers. PIR schemes on coded data for Byzantine or unresponsive servers were presented in \cite{Zhang2017PrivateIR,Hollanti2019}. These schemes are suited for retrieving one single file and therefore use block codes.  In \cite{Hollandi_Convolutional_2018} a scheme for sequential retrieving was proposed but again for a given set of files of fixed size and assuming that \textit{all} the responses of the servers are lost at the same time instant. The case of a non-bursty channel is also considered in this paper but only using unit memory convolutional codes. However, to the best of the author's knowledge, the problem of low-delay private retrieval of a stream of files (of undetermined length) with \textit{some} low or unresponsive servers remains unexplored.

In this work we investigate this more general problem and propose a novel robust scheme to deal with low-delay streaming retrieval of files from $n$ servers in the presence of possible unresponsive servers
by using Maximum Distance Profile (MDP) convolutional codes. 
This class of codes is suitable for low-delay streaming applications as they possess optimal error-correcting capabilities within a decoding window, see \cite{ba15b,to12}. One of the advantages of using convolutional codes over block codes is the sliding window flexibility that allows to select different decoding windows according to the erasure pattern. We show how to take advantage of this property to provide robust PIR in this context. We present a scheme that is able to stream files consisting of many stripes in the presence of erasures without assuming any particular structure in the sequence of erasures. The model in \cite{Hollandi_Convolutional_2018} treated burst erasure channels using general convolutional codes and the non-bursty channel case was treated using unit memory convolutional codes. Unit memory codes are restricted to store only what occurred in the previous instant and therefore are far from optimal for low-delay applications when the given delay constraint is larger than one. Note also that when only burst erasure channels are assumed, there exist concrete constructions of convolutional codes that are optimal in such a context \cite{ba15b,ba13,Martinian2007}. In this work we extend this thread of research and consider a non necessarily bursty channel using convolutional codes with no restriction in the memory, namely, MDP convolutional codes. In contrast to \cite{Hollandi_Convolutional_2018}, where the response of the servers is built in a convolutional fashion but the storage code is still a block code, we also use a convolutional code to store the files on the servers.

\section{Preliminaries}

In this section we recall basic material and introduce the definitions needed for this work, including the notion of convolutional code and superregular matrix.  Let
$\mathbb F =\mathbb F_q $ be a finite field of size $q$ and $\mathbb F[z]$ be the ring of polynomials with coefficients in $\mathbb F$.

\begin{definition}
	An $\mathbf{[n,k]}$-\textbf{block code} $\C$ is a $k$-dimensional subspace of $\mathbb F^n$, i.e., there exists $G\in\mathbb F^{k\times n}$ of full row rank matrix such that
	$$\C=\{\vv\in\mathbb F^n\ |\ \vv=\uu G\ \text{for some}\ \uu\in\mathbb F^k\}.$$
	$G$ is called \textbf{generator matrix} of the code and is unique up to left multiplication with an invertible matrix $U\in Gl_k(\mathbb F)$. Furthermore, $u\in\mathbb F^k$ is called \textbf{message vector} and the elements $\vv\in\C$ are called \textbf{codewords}.
\end{definition}


Convolutional codes process a continuous sequence of data instead of blocks of fixed vectors as done by block codes.
%
If we introduce a variable $z$, called the \emph{delay operator}, to indicate the time instant in which each information arrived or each codeword was transmitted, then we can represent the sequence message $(\vv_{0},\vv_{1} , \cdots , \vv_l )$ as a polynomial vector $\vv(z)= \vv_{0}+\vv_{1} z^{} + \cdots + \vv_l z^l $.
Formally, we can define convolutional codes as follows.

A rate $k/n$ \textbf{convolutional code} $\mathcal{C}$ \cite{to12} is an $\mathbb{F}[z]$-submodule of $\mathbb{F}[z]^n$ with rank $k$ given by
$$
\mathcal{C}=\text{im}_{\mathbb{F}[z]}G(z)=\{\vv(z) \in \mathbb{F}[z]^n \  | \ \vv(z)= \uu(z)G(z), \text{ with }  \uu(z)\in \mathbb{F}[z]^k\}
$$
where $G(z)\in \mathbb{F}[z]^{k\times n}$ is a matrix, called generator matrix, that is \emph{basic}, i.e., $G(z)$ has a polynomial right inverse.

Note that if 
$\vv(z)=\uu(z)G(z)$, with
\begin{align*}
\uu(z) &= \uu_0 + \uu_{1}z + \uu_{2} z^{2} + \cdots \ \text{ and } \
G(z) = \sum_{j =0}^\mu G_{j} z^{j}
\end{align*}
then, 
\begin{align*}
\lefteqn{\vv_0 + \vv_{1}z+\vv_{2} z^{2} + \cdots}\\
& = \uu_{0} G_{0}+ \left(\uu_{1}G_{0}+ \uu_{0} G_{1}\right)z+ \left(\uu_{2} G_{0}+\uu_{1} G_{1}+\uu_{0} G_{2}\right)z^2+\cdots
\end{align*}

The maximum degree of all polynomials in the $j$-th row of $G(z)$ is denoted by $\delta_{j}$. The degree $\delta $ of $\mathcal{C}$ is defined as the maximum degree of the full size minors of $G(z)$. We say that $\mathcal{C}$ is an $(n,k,\delta)$ convolutional code \cite{mc98}.
Important for the performance of a code in terms of error-free decoding is the (Hamming) distance between two codewords. In the case of convolutional codes, the most relevant notion of distance for low-delay decoding is the column distance that can be defined as follows.

The $\mathbf{j}$\textbf{-th column distance} \cite{jo99} is defined as
\[
d_{j}^{c}\left(\mathcal{C}\right) =
\min\left\{  \text{wt} \left(  \vv_{[0,j]}(z)\right) |\ \vv(z) \in\mathcal{C} \text{ and } \vv_{0} \neq \mathbf{0} \right\}, %
\]
where $ \vv_{[0,j]}(z)=\vv_{0}+\vv_{1} z + \cdots + \vv_j z^j $ represents the $j$-th truncation of the codeword $ \vv(z) \in\mathcal{C}$ and
\[
\text{wt}(\vv_{[0,j]}(z))=\text{wt}(\vv_{0})+\text{wt}(\vv_{1})+ \cdots + \text{wt}(\vv_j)
\]
where $\text{wt}(\vv_{i})$ is the Hamming weight of $\vv_{i}$, which determines the number of nonzero components of $\vv_{i}$, for $i=1, \ldots, j$.
For simplicity, we use $d_j^c$ instead of $d_j^c(\mathcal{C})$.

The $j$-th column distance is upper bounded \cite{gl03} by
\[
d_j^c \leq (n-k)(j+1)+1,
\]
and the maximality of any of the column distances implies the maximality of all the previous ones, that is, if $d_j^c = (n-k)(j+1)+1$ for some $j$, then $d_i^c = (n-k)(i+1)+1$ for all $i \leq j$.
The value
\begin{equation}\label{MDPvalue}
L=\left \lfloor \frac{\delta}{k} \right \rfloor + \left \lfloor \frac{\delta}{n-k} \right \rfloor
\end{equation}
is the largest value for which the bound can be achieved and an $(n,k,\delta)$ convolutional code $\mathcal{C}$ with $ d_L^c= (n-k)(L+1)+1$ is called a \emph{maximum distance profile} (MDP) code  \cite{gl03}. Hence, MDP codes have optimal error correcting capabilities within time intervals and therefore are ideal for low delay correction. In this work we shall assume that the retrieval must be performed within a given delay constraint $\Delta \leq L$, see \cite{cassuto17,ba15b}. 

%


Assume that
\[
G(z)=\sum_{j=0}^{\mu}G_{j}z^{j},G_{j}\in\mathbb{F}^{k\times n},G_{\mu}\neq 0,
\]
and consider the associated \textbf{sliding  matrix}
\begin{equation} \label{SliGen}
G_{j}^c=\left(
\begin{array}
[c]{cccc}%
G_{0} & G_{1} & \cdots & G_{j}\\
& G_{0} & \cdots & G_{j-1}\\
&       & \ddots & \vdots\\
&       &         & G_{0}%
\end{array}
\right)
\end{equation}
with $G_j=0$ when $j>\mu$, for $j\in\mathbb{N}$.

%
%
%
%
%

\begin{theorem}[Theorem~$2.4$ in \cite{gl03}]\label{slid}
	Let $G_{j}^c$ be the matrices defined in (\ref{SliGen}). Then the following statements are equivalent:
	\begin{enumerate}
		\item $d_{j}^c=(n-k)(j+1)+1$;
		\item  every $(j+1)k\times (j+1)k$ full size minor of $G_{j}^c$ formed from the columns with the indices $1 \leq t_1 \leq \cdots \leq t_{(j+1)k}$, where $t_{ik+1} \leq in$, for $i=1,2,\ldots,j$ is nonzero;
	\end{enumerate}
	In particular, when $j=L$, $\mathcal{C}$ is an MDP code.
\end{theorem}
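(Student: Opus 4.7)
The plan is to translate the column-distance identity into a statement about kernels of column-indexed submatrices of $G_j^c$, and to pin down which index sets $T$ give minors that really detect codewords with $\vv_0\neq\mathbf{0}$. First I would reformulate condition~(1): because $(\vv_0,\ldots,\vv_j)=(\uu_0,\ldots,\uu_j)\,G_j^c$ and the bound $d_j^c\le(n-k)(j+1)+1$ holds unconditionally, equality is equivalent to the statement that for every $(\uu_0,\ldots,\uu_j)$ with $\uu_0\neq\mathbf{0}$, the image vector $(\vv_0,\ldots,\vv_j)$ has at most $k(j+1)-1$ zero coordinates; equivalently, there is no subset $T\subseteq\{1,\ldots,(j+1)n\}$ of cardinality $k(j+1)$ such that the linear system $(\uu_0,\ldots,\uu_j)\,G_j^c|_T=0$ admits a solution with $\uu_0\neq\mathbf{0}$.

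Next I would exploit the block-triangular Toeplitz shape of $G_j^c$: a column lying in block $\vv_a$ (indices $an+1,\ldots,(a+1)n$) is supported only on the first $(a+1)k$ rows. If too many columns of $T$ fall in the early blocks, the corresponding minor collapses for a trivial rank reason, so only index patterns that respect a suitable balance between block indices and the available row count -- which is the content of the constraint on $t_{ik+1}$ -- can possibly yield a nonzero minor. Under this admissibility condition I would prove the core linear-algebraic claim: every nontrivial left nullvector $(\uu_0,\ldots,\uu_j)$ of $G_j^c|_T$ must have $\uu_0\neq\mathbf{0}$. This would be proved by induction on the smallest index $i$ with $\uu_i\neq\mathbf{0}$: if $\uu_0=\cdots=\uu_{i-1}=\mathbf{0}$, the annihilation on $T$ reduces, after discarding the columns of $T$ lying in the first $in$ positions, to a vanishing relation for a lower-order sliding matrix of shape $G_{j-i}^c$, and the admissibility bound on $|T\cap\{1,\ldots,in\}|$ forces a column-count contradiction.

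From these ingredients both implications would fall out. For $(1)\Rightarrow(2)$, if some admissible minor were zero, the claim produces a nullvector with $\uu_0\neq\mathbf{0}$, hence a codeword with $\vv_0\neq\mathbf{0}$ whose $[0,j]$-truncation has weight at most $(n-k)(j+1)$, contradicting~(1). For $(2)\Rightarrow(1)$, a codeword witnessing $d_j^c\le(n-k)(j+1)$ has at least $k(j+1)$ zero coordinates in its $[0,j]$-truncation, and I would carve out an admissible $T$ of size $k(j+1)$ from those zeros by a greedy/Hall-type selection. The fact that maximality of $d_j^c$ automatically propagates to each $d_i^c$ with $i<j$ (noted in the text preceding~(\ref{MDPvalue})) is what guarantees that the zeros are spread widely enough across the blocks to meet the admissibility requirement. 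The induced minor is then zero, contradicting~(2). The MDP assertion is the specialization $j=L$.

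The main obstacle is the admissibility step: rigorously ruling out left nullvectors of $G_j^c|_T$ with vanishing $\uu_0$-block. This requires carefully tracking the Toeplitz restriction when the leading blocks of $\uu$ are zero, verifying that the inductive reduction to $G_{j-i}^c$ runs through with the correct column-count accounting, and, for the converse direction, confirming that the zero coordinates of a low-weight truncated codeword can always be assembled into an index pattern satisfying the stated inequalities.
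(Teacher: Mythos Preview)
The paper does not prove this theorem at all; it is quoted verbatim as Theorem~2.4 of \cite{gl03} and used as a black box, so there is no argument in the present paper to compare against.

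On your outline itself, two points need repair. First, the ``core linear-algebraic claim'' that admissibility of $T$ alone forces every nontrivial left null vector of $G_j^c|_T$ to have $\uu_0\neq\mathbf{0}$ is not a standalone fact, and your proposed ``column-count contradiction'' does not materialise: after the reduction you describe, $(\uu_i,\ldots,\uu_j)$ annihilates at least $(j+1-i)k$ columns of the $(j+1-i)k$-row matrix $G_{j-i}^c$, which is no contradiction by itself. What actually closes the case $\uu_0=\mathbf{0}$ in the direction $(1)\Rightarrow(2)$ is that the reduced vector yields a truncated codeword with nonzero first block and weight at most $(j+1-i)(n-k)$, contradicting the maximality of $d_{j-i}^c$; and that maximality is available precisely because you are assuming~(1), which propagates downward. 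Second, your argument for $(2)\Rightarrow(1)$ is circular as written: you appeal to ``maximality of $d_j^c$ propagates to $d_i^c$'' to control the spread of zeros, but $d_j^c$ is exactly what you are trying to establish. The clean fix is induction on $j$: the block splitting $G_j^c=\left(\begin{smallmatrix}G_i^c & *\\ 0 & G_{j-i-1}^c\end{smallmatrix}\right)$ shows that condition~(2) at level $j$ implies condition~(2) at every level $i<j$, hence by the inductive hypothesis $d_i^c=(n-k)(i+1)+1$ for all $i<j$; this bounds the zeros of any witness in each prefix $\{1,\ldots,in\}$ by $ik-1$, so \emph{every} size-$(j+1)k$ subset of its zero set is automatically admissible, and~(2) then yields the contradiction. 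With these two adjustments your plan is the standard one and goes through.
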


\begin{theorem}\label{window}\cite[Theorem 3.1]{to12}
Let $\C$ be an $(n,k,\delta)$ MDP convolutional code.
If $d_j^c =(n-k)(j + 1) + 1$  and in any sliding window of length $(j + 1)n$ at most $(j+1)(n-k)$ erasures occur in a transmitted sequence, then complete recovery is possible.
\end{theorem}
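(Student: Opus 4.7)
The plan is to decode the information sequence $\uu(z) = \uu_0 + \uu_1 z + \uu_2 z^2 + \cdots$ block by block via an inductive sliding-window argument, exploiting the $\mathbb{F}[z]$-module structure of $\mathcal{C}$ together with the definition of the column distance $d_j^c$.

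\emph{Base case (recovery of $\uu_0$).} First I would isolate the truncation $\vv_{[0,j]}(z) = \vv_0 + \vv_1 z + \cdots + \vv_j z^j$, which consists of $(j+1)n$ symbols and, by hypothesis, has at most $(j+1)(n-k) = d_j^c - 1$ erasures. Suppose $\vv(z), \tilde\vv(z) \in \mathcal{C}$ are two codewords consistent with all non-erased coordinates of this window and with $\vv_0 \neq \tilde\vv_0$. Then $\vv(z) - \tilde\vv(z) \in \mathcal{C}$ has a nonzero leading block, so by definition of column distance its $j$-th truncation has weight at least $d_j^c = (n-k)(j+1)+1$. But the support of $(\vv - \tilde\vv)_{[0,j]}(z)$ is contained in the erased coordinates, which number at most $(j+1)(n-k)$, a contradiction. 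Hence $\vv_0$, and therefore $\uu_0$ (using that $G_0$ has full row rank, which is forced by $d_0^c = n-k+1$), is uniquely determined.

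\emph{Inductive step.} Assume $\uu_0, \ldots, \uu_{i-1}$ have been recovered. I would form the known partial codeword $\tilde\vv(z) = \left(\uu_0 + \uu_1 z + \cdots + \uu_{i-1} z^{i-1}\right) G(z)$ and subtract it from the received sequence; the residual $\vv(z) - \tilde\vv(z)$ vanishes on blocks $0, \ldots, i-1$ and equals $z^i \left(\uu_i + \uu_{i+1} z + \cdots\right) G(z)$. Shifting by $z^{-i}$ yields a bona fide codeword of $\mathcal{C}$ (this is where the $\mathbb{F}[z]$-module property is used: removing a known prefix and re-indexing keeps us in the code) whose $(j+1)$-block truncation corresponds to the received window at times $i, i+1, \ldots, i+j$. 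The sliding-window hypothesis guarantees at most $(j+1)(n-k)$ erasures in this window, so the base-case argument applied to the shifted codeword recovers $\uu_i$. Iterating yields complete recovery.

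\emph{Main obstacle.} The conceptually delicate point is the reduction to the base case: one must argue carefully that after subtracting the contribution of the already-decoded symbols, the tail is itself a codeword starting at time $0$, and that erasure positions remain erasure positions (the subtraction introduces no new unknowns). Once this shift-invariance style reduction is cleanly stated, the rest follows directly from the definition of $d_j^c$ by the standard ``two codewords agreeing on unerased positions must coincide'' argument. The column-distance bound $d_j^c \leq (n-k)(j+1)+1$ need not be invoked; only the attainment of equality and the module structure are used.
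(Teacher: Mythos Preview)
The paper does not actually supply a proof of this theorem; it is quoted from \cite{to12} and only followed by the remark that ``the recovery is even possible within a delay of $j$ windows of size $n$.'' Your inductive sliding-window argument is exactly the standard proof from that reference: uniqueness of $\vv_0$ from the column-distance definition, then subtract-and-shift to reduce the recovery of $\uu_i$ to the base case applied to the window $[i,i+j]$. This is correct, and in particular your base case makes explicit the delay-$j$ recovery the paper alludes to.
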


Considering the proof of this theorem, one sees that the recovery is even possible within a delay of $j$ windows of size $n$ and that the given condition for complete recovery is only sufficient but not necessary.\\

We will develop a PIR scheme in which the star product of certain block codes plays an important role.

\begin{definition}\label{star}
	The \textbf{star product} of two vectors $v,w\in\mathbb F^n$ is defined as $v\ast w=(v_1w_1,\hdots,v_nw_n)$. The \textbf{star product} of two block codes $\mathcal{C}, \mathcal{D}\subset\mathbb F^n$ is defined as $\mathcal{C}\ast\mathcal{D}=\langle c\ast d\ |\ c\in\mathcal{C},\ d\in\mathcal{D}\rangle$.
\end{definition}

Star product PIR was first introduced in \cite{HollandiPIR2017}. The main idea of this scheme is to design  the queries to the different servers in such a way that if the responses are formed as inner products of the query and the stored information, then the total response is a codeword of a certain star product code with some error, where the error contains the information one is interested in.
In \cite{Hollandi_Convolutional_2018} this scheme was adopted forming the responses in a convolutional way. In the following section, we present a star product scheme where the responses as well as the storage code are convolutional.

\section{Streaming PIR scheme}

We have $m$ sequences of files $(X_s^{i})_{s\in\mathbb N}$ with $X_s^{i}\in\mathbb F^k$ for $i=1,\hdots, m$ and $s\in\mathbb N$.  These are encoded with an $(n,k,\delta)$ MDP convolutional code $\mathcal{C}$ with generator matrix $G(z)=\sum_{r=0}^{\mu}G_rz^{r}$ to obtain the sequences of files $(Y_t^{i})_{t\in\mathbb N}$ with $Y_t^{i}=\sum_{r+s=t}X_s^{i}G_r\in\mathbb F^n$ for $i=1,\hdots, m$ and $t\in\mathbb N$ where we set $G_r=0$ for $r>\mu$. Moreover, we have $n$ servers and for $j=1,\hdots, n$, we store the $j$-th component $Y_{t,j}^{i}$ of each vector $Y_t^{i}$ (for $i=1,\hdots,m$, $t\in\mathbb N$) on server number $j$. Furthermore, we assume that $(\mu+2)k\leq n$ and that for $f=1,\hdots,\mu$, $\begin{pmatrix}G_0\\ \vdots\\ G_f\end{pmatrix}$ is the generator matrix of an $[n,(f+1)k]$ MDS block code denoted by $\mathcal{C}_f$ . We will present a construction for an $(n,k,\delta)$ MDP convolutional code with these properties later in this paper. It holds $Y_t^{i}\in\mathcal{C}_f$ for all $f\in\{t,\hdots,\mu\}$ and $Y_t^{i}\in\mathcal{C}_{\mu}$ for $t\geq\mu$. Thus, we set $\mathcal{C}_f=\mathcal{C}_{\mu}$ for $f\geq\mu$.

The user wants to stream the sequence $(X_s^{i})_{s\in\mathbb N}$ for some $i$ without the servers knowing $i$, i.e. without that the servers know which sequence he or she is streaming. For our PIR scheme we assume that there is no collusion between the servers (i.e. the number of colluding servers, usually denoted by $t$ in the literature is equal to $1$).\\
Set $d=[1\ \hdots\ 1]\in\mathbb F^n$, let $\mathcal{D}$ be the $[n,1]$ block code generated by $d$ and $D\in\mathbb F^{(\mu+1)m\times n}$ be a matrix whose rows are constituted by $(\mu+1)m$ random codewords of  $\mathcal{D}$ (i.e. multiples of $d$). For a subset $J\subset\{1,\hdots,n\}$, we denote by $E\in\mathbb F^n$ the vector with entries $E_j:=\begin{cases}1 & \text{if}\  j\in J\\  0 & \text{otherwise}\end{cases}$ and we denote by $e_j$ the $j$-th standard basis vector of $\mathbb F^{(\mu+1)m}$.\\
For $j=1,\hdots,n$, we send the following query $q_j^{i}$ to server $j$:
\begin{align}
q_j^{i}=D_{\cdot,j}+E_j\sum_{l=0}^{\mu}e_{lm+i}\in\mathbb F^{(\mu+1)m\times 1}
\end{align}
where $D_{\cdot,j}$ denotes the $j$-th column of $D$. We write $q_j^i=\begin{pmatrix}q_{j,1}^i\\ \vdots\\ q_{j,\mu+1}^i\end{pmatrix}$ with $q_{j,k}^i\in\mathbb F^{m\times 1}$ for $k=1,\hdots,\mu+1$ and $Y_{t,j}:=(Y^1_{t,j}, Y^2_{t,j},\hdots, Y^m_{t,j})\in\mathbb F^{m}$.\\
The response of server $j$ at time $t\in\mathbb N$ is
\begin{align}
r_{t,j}^{i}=\sum_{k+r-1=t}\langle q_{j,k}^{i},Y^{\top}_{r,j}\rangle\in\mathbb F
\end{align}
where $Y_{r,j}=0$ for $r\not\in\mathbb N$.\\
Hence the total response at time $t\in\mathbb N$ is given by
\begin{align}
r_t^{i}&=[r_{t,1}^{i},\hdots,r_{t,n}^{i}]=\nonumber\\
&=\underbrace{[D_{1,1}Y_{t,1}^{1},\hdots, D_{1,n}Y_{t,n}^{1}]}_{\in\mathcal{D}\ast\mathcal{C}_t}+\underbrace{[D_{2,1}Y_{t,1}^{2},\hdots, D_{2,n}Y_{t,n}^{2}]}_{\in\mathcal{D}\ast\mathcal{C}_t}+\cdots+\nonumber\\
&+\underbrace{[D_{(\mu+1)m,1}Y_{t-\mu,1}^{m},\hdots, D_{(\mu+1)m,n}Y_{t-\mu,n}^{m}]}_{\in\mathcal{D}\ast\mathcal{C}_t}+diag(E)\sum_{l=0}^{\mu}Y^{i}_{t-l}
\end{align}
where diag(E) denotes the diagonal matrix with diagonal entries equal to the entries of the vector $E$.

By Definition \ref{star} and the definition of the code $\mathcal{D}$ the star product code $\mathcal{D}\ast\mathcal{C}_t$ is equal to the MDS code $\mathcal{C}_t$. As $\mathcal{D}\ast\mathcal{C}_t$ is a linear code, any sum of codewords is again a codeword. Hence, the response has the form
\begin{align}\label{2parts}
r_t^{i}=c_t+diag(E)\sum_{l=0}^{\mu}Y^{i}_{t-l}
\end{align}
for some $c_t\in\mathcal{C}_t$.

We assume that it is possible that some parts of the response at time $t$ get lost during transmission and could not be received. Hence the vector $r_t^{i}$ could have some erased components. We denote by $T_t\subset\{1,\hdots,n\}$ the set that consists of the positions of the erased components of the vector $r_t^{i}$.

\begin{lemma}
	If $|T_t\cup J|<n-k(\min\{t,\mu\}+1)+1$, the user is able to obtain the vector $diag(E)\sum_{l=0}^{\mu}Y^{i}_{t-l}$. In particular, this is true if $|J|+n_t<n-k(\min\{t,\mu\}+1)+1$, where $n_t$ is the number of erased components of the vector $r_t^{i}$.
\end{lemma}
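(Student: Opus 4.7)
The plan is to start from the decomposition of the response $r_t^{i}=c_t+v$, where $c_t\in\mathcal{C}_t$ and $v:=diag(E)\sum_{l=0}^{\mu}Y^{i}_{t-l}$, which is the identity derived just before the lemma. The first observation I would make is that since $E_j=0$ for every $j\notin J$, the support of $v$ is contained in $J$; hence at every coordinate $j\notin J$ the response already reads off the codeword value: $r_{t,j}^{i}=c_{t,j}$.

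Then I would count the coordinates at which the user knows $c_t$ exactly. These are precisely the non-erased coordinates lying outside $J$, so $n-|T_t\cup J|$ of them. The hypothesis $|T_t\cup J|<n-k(\min\{t,\mu\}+1)+1$ rewrites as $n-|T_t\cup J|\geq k(\min\{t,\mu\}+1)$, i.e.\ at least the dimension of $\mathcal{C}_t$. Here I invoke the standing assumption that each $\mathcal{C}_f$ (for $f\leq\mu$) is an $[n,(f+1)k]$ MDS block code, with $\mathcal{C}_t=\mathcal{C}_{\mu}$ for $t\geq\mu$: in an MDS code every set of coordinates of cardinality equal to the dimension is an information set, so $c_t$ is uniquely determined at all $n$ positions from the available read-outs. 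Subtracting the recovered $c_t$ from $r_t^{i}$ on the non-erased coordinates, and using $v_j=0$ outside $J$, produces the vector $v=diag(E)\sum_{l=0}^{\mu}Y^{i}_{t-l}$.

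The ``in particular'' clause is then immediate from the set-theoretic bound $|T_t\cup J|\leq|T_t|+|J|=n_t+|J|$, so the stronger inequality $|J|+n_t<n-k(\min\{t,\mu\}+1)+1$ implies the first hypothesis. I do not anticipate a serious obstacle: the main conceptual step is realizing that the mask $J$ behaves, from the point of view of $c_t$, exactly like an additional set of erased coordinates, reducing the whole lemma to classical MDS erasure decoding; the rest is just keeping the bookkeeping of coordinate sets correct.
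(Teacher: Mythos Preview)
Your proposal is correct and follows essentially the same approach as the paper: treat the positions in $J$ as additional erasures for the MDS code $\mathcal{C}_t$, so that $c_t$ is determined from the $n-|T_t\cup J|$ available coordinates, and then subtract to obtain $diag(E)\sum_{l=0}^{\mu}Y^{i}_{t-l}$. The only cosmetic difference is that you phrase the MDS decoding via information sets whereas the paper invokes the minimum distance $n-k(\min\{t,\mu\}+1)+1$ directly; these are equivalent formulations of the same step.
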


\begin{proof}
Using equation \eqref{2parts} and the definition of the vector $E$, we apply erasure decoding in the $[n,(\min\{t,\mu\}+1)k]$ MDS code $\mathcal{C}_t$ to the vector $r_t^{i}$ where the set of erasures is the union of  $T$ and $J$. The lemma follows from the fact that an $[n,(\min\{t,\mu\}+1)k]$ MDS code could correct  any set of erasures whose cardinality is smaller than the minimum distance $n-k(\min\{t,\mu\}+1)+1$ of the code.
\end{proof}

For each $t\in\mathbb N$ for which the condition of the preceding lemma is not fulfilled we are not able to obtain  $diag(E)\sum_{l=0}^{\mu}Y^{i}_{t-l}$. Therefore, we define
\begin{align}
diag_t(\hat{E})=\begin{cases} diag(E) & \text{if}\ |T_t\cup J|<n-k(\min\{t,\mu\}+1)+1\\
0_n & \text{otherwise}\end{cases}
\end{align}
where $0_n$ denotes the $n\times n$ zero matrix.

It remains to show how to obtain the desired sequence of files $(X_s^{i})_{s\in\mathbb N}$ from the sequence $(diag_t(\hat{E})\sum_{l=0}^{\mu}Y^{i}_{t-l})_{t\in\mathbb N}$. With the definitions $\hat{r}^{i}_t:=\sum_{l=0}^{\mu}Y^{i}_{t-l}$ and
\begin{align}\label{sliding}
\tilde{\mathcal{G}}:=\left[\begin{array}{ccccccccc}G_0 & G_0+G_1 & \cdots & \sum_{r=0}^{\mu}G_r & \sum_{r=1}^{\mu}G_r  & \cdots & G_{\mu} & & \\
& G_0 & G_0+G_1 & \cdots & \sum_{r=0}^{\mu}G_r & \sum_{r=1}^{\mu}G_r  & \cdots & G_{\mu} & \\
&\qquad \  \ddots & \ddots & & \ddots & \ddots & &\qquad \ \ddots\end{array}\right]
\end{align}
one obtains
\begin{align}\label{dec}
[\hat{r}^{i}_1, \hat{r}^{i}_2, \hdots ]
=[X_1^{i},X_2^{i},\hdots ]\cdot\tilde{\mathcal{G}}
\end{align}
Denote by $I_k\in\mathbb F^{k\times k}$ the identity matrix and set $U\!:=\!\!\left[\begin{array}{cccc}I_k & \cdots & I_k & \\
&I_k & \cdots & I_k  \\
&\qquad \  \ddots &  & \qquad \ \ddots\end{array}\right]$ where each block of $k$ rows of $U$ contains $\mu+1$ identity matrices. Then, one has

\begin{align}\label{u}
\tilde{\mathcal{G}}=U\cdot\underbrace{\left[\begin{array}{cccccc}G_0 & G_1 & \cdots & G_{\mu} & & \\
	& G_0 & G_1 & \cdots  & G_{\mu} & \\
	&\qquad \  \ddots & \qquad\ddots & &\qquad \ \ddots\end{array}\right]}_{:=\mathcal{G}}.
\end{align}
Therefore, one obtains the following lemma.

\begin{lemma}
	The column distances of the convolutional code $\tilde{\mathcal{C}}$ with generator matrix $\tilde{G}(z)=\sum_{b=0}\tilde{G}_bz^{b}$ where $\tilde{G}_b=\sum_{r=0}^{\mu}G_{b-r}$ are equal to the column distances of $\mathcal{C}$.
\end{lemma}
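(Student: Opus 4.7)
The plan is to exploit directly the factorization $\tilde{\mathcal{G}} = U\cdot \mathcal{G}$ given in \eqref{u}, together with the characterization of the column distance in terms of the sliding generator matrix. First, I would truncate both sides of \eqref{u} to the first $(j+1)k$ rows and $(j+1)n$ columns, in order to relate the sliding generator matrix $\tilde{G}_j^c$ of $\tilde{\mathcal{C}}$ to the sliding generator matrix $G_j^c$ of $\mathcal{C}$ defined in \eqref{SliGen}. Because $U$ is block upper triangular with $I_k$ on each diagonal block, the truncated upper left $(j+1)k\times (j+1)k$ block $U_j$ inherits this structure, so the truncation respects multiplication and one obtains $\tilde{G}_j^c = U_j\, G_j^c$.

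Next, I would use this identity to set up a weight preserving bijection between the truncated codewords of the two codes. For any message $\tilde{u} = (\tilde{u}_0, \dots, \tilde{u}_j)\in \mathbb F^{(j+1)k}$ one has $\tilde{u}\,\tilde{G}_j^c = (\tilde{u}\,U_j)\,G_j^c$, so setting $u=\tilde{u}\,U_j$ yields a truncated codeword of $\mathcal{C}$ with exactly the same Hamming weight. Since $U_j$ is invertible (upper triangular with identity diagonal blocks), this map $\tilde{u}\mapsto u$ is a bijection on $\mathbb F^{(j+1)k}$.

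The last thing to check is that the constraint defining the column distance, namely $v_0 \neq 0$, is preserved by this bijection. Note that $\tilde{G}_0 = G_0$, which is the generator matrix of the MDS block code $\mathcal{C}_0$ and therefore has full row rank, so $v_0 \neq 0$ is equivalent to the first block of the message being nonzero in both codes. Reading off the first block column of $U_j$, only the first diagonal $I_k$ contributes, giving $u_0 = \tilde{u}_0$. Hence $\tilde{u}_0 \neq 0 \iff u_0 \neq 0$, the minimization sets in the definition of $d_j^c(\tilde{\mathcal{C}})$ and $d_j^c(\mathcal{C})$ are in weight preserving bijection, and the equality of column distances follows.

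The proof is essentially bookkeeping once $\tilde{G}_j^c = U_j G_j^c$ is established; the only genuine point requiring care is the preservation of the condition $v_0\neq 0$, which is why I would explicitly verify that $U_j$ has identity diagonal blocks and that $G_0$ has full row rank (guaranteed here by the MDS assumption on $\mathcal{C}_0$). This is the step I would double check, since without it the bijection on messages would not translate into a bijection on the admissible set of codewords used in the definition of $d_j^c$.
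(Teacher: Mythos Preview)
Your proposal is correct and follows essentially the same approach as the paper: truncate the factorization $\tilde{\mathcal{G}}=U\mathcal{G}$ from \eqref{u} to obtain $\tilde{G}_j^c=U_jG_j^c$ with $U_j$ invertible, and use this to set up a weight-preserving bijection between the truncated message sets. You are in fact slightly more explicit than the paper in verifying that the constraint $v_0\neq 0$ (equivalently $u_0\neq 0$, since $G_0$ has full row rank) is preserved, by observing that the first block column of $U_j$ forces $u_0=\tilde{u}_0$; the paper's proof leaves this step implicit.
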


\begin{proof}
First note that the matrix $\tilde{\mathcal{G}}$ defined in \eqref{sliding} is the sliding generator matrix of $\tilde{\mathcal{C}}$. Denote by $\tilde{d}_j^{c}$ the $j$-th column distance of the code $\tilde{\mathcal{C}}$ and by $U_j$ the matrix that consists of the first $k(j+1)$ rows and the first $k(j+1)$ columns of the matrix $U$. Then, it holds
\begin{align}
\tilde{d}_j^{c}&=\min_{X^{i}_1\neq 0}\left(wt\left([X^{i}_1, \hdots, X^{i}_{j+1}]\cdot\tilde{G}_j^c\right)\right)\stackrel{\eqref{u}}{=}\min_{X^{i}_1\neq 0}\left(wt\left([X^{i}_1, \hdots, X^{i}_{j+1}]U_j\cdot G_j^c\right)\right)\nonumber\\
&=\min_{\hat{X}^{i}_1\neq 0}\left(wt\left([\hat{X}^{i}_1, \hdots, \hat{X}^{i}_{j+1}]\cdot G_j^c\right)\right)=d_j^c
\end{align}
\end{proof}

Hence, we could use equation \eqref{dec} to obtain $[diag_1(\hat{E})\hat{r}^{i}_1, diag_2(\hat{E})\hat{r}^{i}_2, \hdots ]$ via erasure decoding with an MDP convolutional code where the set of positions of the total erasures denoted by $T$ has the form $T=\bigcup_{t\in\mathbb N}S_t$ with
\begin{align}\label{T}
S_t=\begin{cases}\underbrace{\{T_t+(t-1)n\}}_{\text{transmission erasures}}\cup\underbrace{(\{1+(t-1)n,\hdots,n+(t-1)n\}\setminus\{J+(t-1)n\})}_{\text{erasures caused by the multiplication with}\ diag(E)} & \text{if}\ diag_t(\hat{E})\neq 0\\
\vspace{0.1mm}\\
\{1+(t-1)n,\hdots,n+(t-1)n\} & \text{otherwise}\end{cases}
\end{align}
\normalsize
where for $J=\{j_1,\hdots,j_{|J|}\}$, the set $\{J+(t-1)n\}$ is defined as\\ $\{j_1+(t-1)n,\hdots,j_{|J|}+(t-1)n\}$ and $\{T_t+(t-1)n\}$ should be defined analogous.
Hence, using also Theorem \ref{window}, we get the following theorem.

\begin{theorem}
Assume that $\Delta\leq L=\lfloor\frac{\delta}{n-k}\rfloor+\lfloor\frac{\delta}{k}\rfloor$. If the set of erasures $T$ given in \eqref{T} is such that in every sliding window of the sequence $(r_t^{i})_{t\in\mathbb N}$ of size $(\Delta+1)n$ there are not more than $(\Delta+1)(n-k)$ erasures, then one could obtain the desired sequence of files $(X_s^{i})_{s\in\mathbb N}$ from the sequence $(diag_t(\hat{E})\sum_{l=0}^{\mu}Y^{i}_{t-l})_{t\in\mathbb N}$ within time delay $\Delta$, i.e. one could privatly obtain the sequence of files $(X_s^{i})_{s\in\mathbb N}$ within time delay $\Delta$.
\end{theorem}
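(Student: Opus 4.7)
The plan is to combine the two preceding lemmas with the sliding-window decoding guarantee of Theorem~\ref{window}, applied to the convolutional code $\tilde{\mathcal{C}}$.

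First I would summarize the data reduction that has already been carried out. By the previous lemma, at every time $t$ for which $|T_t\cup J|<n-k(\min\{t,\mu\}+1)+1$ the user recovers $diag(E)\hat{r}^{i}_t$ exactly from the (partially erased) response $r^{i}_t$, while at all other times this value is replaced by $0$. Consequently the sequence $(diag_t(\hat{E})\hat{r}^{i}_t)_{t\in\mathbb N}$ is obtained with a total erasure pattern that is precisely the set $T$ given in \eqref{T}, since $T$ was constructed to merge the transmission erasures $T_t$ with the positions outside $J$ (those annihilated by $diag(E)$), together with full blocks of size $n$ at the instants where the per-time-step MDS decoding already fails.

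Next I would identify this sequence as a received word of $\tilde{\mathcal{C}}$. Equation \eqref{dec} shows that $(\hat{r}^{i}_t)_{t\in\mathbb N}$ is obtained from $(X^{i}_s)_{s\in\mathbb N}$ through the sliding generator matrix $\tilde{\mathcal{G}}$, and by the preceding lemma the column distances of $\tilde{\mathcal{C}}$ coincide with those of $\mathcal{C}$. Since $\mathcal{C}$ is $(n,k,\delta)$ MDP, $d^{c}_{L}=(n-k)(L+1)+1$, and by the monotonicity property stated right after Theorem~\ref{slid} this forces $d^{c}_{\Delta}=(n-k)(\Delta+1)+1$ for every $\Delta\leq L$. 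Hence $\tilde{d}^{c}_{\Delta}=(n-k)(\Delta+1)+1$ as well.

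At this point I would invoke Theorem~\ref{window} with $j=\Delta$ applied to $\tilde{\mathcal{C}}$. The hypothesis that every sliding window of length $(\Delta+1)n$ contains at most $(\Delta+1)(n-k)$ erasures of $T$ is exactly the condition the theorem requires in order to reconstruct each symbol of $(X^{i}_s)_{s\in\mathbb N}$ within $\Delta$ windows of size $n$, i.e.\ with delay $\Delta$. The privacy claim is handled separately and follows from the construction of the queries: for each server $j$ the column $D_{\cdot,j}$ is uniformly random in $\mathbb F^{(\mu+1)m}$, so $q_j^{i}$ is uniformly distributed from the perspective of server $j$ and independent of $i$. The main obstacle, although essentially a bookkeeping one, is verifying that the erasure set $T$ in \eqref{T} is the right one, so that the single window hypothesis drives both the per-time-step MDS decoding (the role of the previous lemma) and the subsequent MDP convolutional decoding; once this is aligned, the conclusion follows directly from Theorem~\ref{window}.
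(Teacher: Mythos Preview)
Your proposal is correct and follows essentially the same approach as the paper, which simply states ``Hence, using also Theorem~\ref{window}, we get the following theorem'' without further elaboration. Your write-up is a faithful and more detailed unpacking of that one-line justification: you use the equality of column distances from the preceding lemma to transfer the MDP property to $\tilde{\mathcal{C}}$, then invoke Theorem~\ref{window} with $j=\Delta$, and you additionally spell out the privacy argument (uniformity of $q_j^{i}$), which the paper leaves implicit in the scheme description.
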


From this theorem we could deduce which erasure patterns we can correct for sure with our proposed scheme.

\begin{corollary}\label{er}
	With the proposed scheme private reception within time delay $\Delta\leq L$ is possible if for $t\in\mathbb N$, there are not more than $n-k(\min\{\mu,t\}+1)-|J|$ transmission erasures in positions $\{1+(t-1)n,\hdots,n+(t-1)n\}\setminus\{J+(t-1)n\}$  of the sequence of responses $(r^{i}_t)_{t\in\mathbb N}$ and in every sliding window of this sequence of length $(\Delta+1)n$ there are not more than $(\Delta+1)(n-k)$ transmission erasures in positions $\{1,\hdots,(\Delta+1)n\}\cap\bigcup_{t\in\mathbb N}\{J+(t-1)n\}$.
\end{corollary}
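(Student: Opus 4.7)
The plan is to verify the two sufficient conditions of the preceding theorem one after the other, using the erasure-set description in \eqref{T}. The first hypothesis of the corollary is tailored to control the MDS step: writing $T_t = (T_t \cap J) \cup (T_t \setminus J)$ as a disjoint union, the bound $|T_t \setminus J| \leq n - k(\min\{\mu,t\}+1) - |J|$ gives
$$|T_t \cup J| = |T_t \setminus J| + |J| \leq n - k(\min\{\mu,t\}+1) < n - k(\min\{\mu,t\}+1) + 1,$$
which is exactly what the lemma preceding the last theorem requires. Hence $diag_t(\hat{E}) = diag(E)$ for every $t$, so in \eqref{T} only the first case occurs.

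Next I would count the erasures that feed into the MDP decoder on $\tilde{\mathcal{C}}$. The $E$-mask contributes the $n-|J|$ positions outside $J$ in every block, and the only transmission erasures that are not already subsumed by this set are those lying inside $J$. Thus $|S_t| = (n-|J|) + |T_t \cap J|$, and summing over any sliding window of length $\Delta+1$ blocks gives a total of $(\Delta+1)(n-|J|) + \sum_{t} |T_t \cap J|$ erasures. The second hypothesis of the corollary is the statement that this total stays within the $(\Delta+1)(n-k)$ budget allowed by Theorem~\ref{window}.

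Finally, by the previous lemma the column distances of $\tilde{\mathcal{C}}$ coincide with those of the MDP code $\mathcal{C}$, so $\tilde{\mathcal{C}}$ itself attains the MDP bound at $L$. Since $\Delta \leq L$, Theorem~\ref{window} applies to $\tilde{\mathcal{C}}$ with window parameter $\Delta$, and the remark right after Theorem~\ref{window} ensures recovery within delay $\Delta$ rather than just in infinite time. Combined with \eqref{dec} this recovers $(X^i_s)_{s \in \mathbb{N}}$ privately within delay $\Delta$, as desired.

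The main obstacle, and essentially the only delicate point, is the erasure bookkeeping for the MDP decoder: positions outside $J$ are always erased because the $E$-mask zeroes them in $\hat{r}^i_t$, whereas positions inside $J$ are erased only when they suffer a transmission erasure. Double-counting the transmission erasures in $T_t \setminus J$ (already absorbed by the $E$-mask set) would give a loose bound, so the disjoint decomposition $T_t = (T_t \cap J) \cup (T_t \setminus J)$ is what makes the two hypotheses of the corollary match the two ingredients (single-block MDS decoding and sliding-window MDP decoding) of the preceding theorem exactly.
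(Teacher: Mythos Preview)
Your approach matches the paper's, which simply states that the corollary follows from the preceding theorem without giving further details. Your elaboration --- splitting $T_t$ into $T_t\cap J$ and $T_t\setminus J$, using the first hypothesis to force $diag_t(\hat E)=diag(E)$ via the lemma, and then counting $|S_t|=(n-|J|)+|T_t\cap J|$ --- is exactly the right bookkeeping.

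There is, however, a genuine gap in your final step. You write that ``the second hypothesis of the corollary is the statement that this total stays within the $(\Delta+1)(n-k)$ budget.'' But the corollary, read literally, bounds only the $J$-part $\sum_t|T_t\cap J|$ by $(\Delta+1)(n-k)$, not the total $(\Delta+1)(n-|J|)+\sum_t|T_t\cap J|$. Your own computation shows that what is actually needed for the theorem to fire is
\[
\sum_t|T_t\cap J|\;\le\;(\Delta+1)(n-k)-(\Delta+1)(n-|J|)\;=\;(\Delta+1)(|J|-k),
\]
which is strictly stronger than the stated hypothesis whenever $|J|<n$. The discussion immediately following the corollary (leading to \eqref{ermdp}) and the worked examples both use the per-block bound $|J|-k$, so the $(\Delta+1)(n-k)$ in the corollary's second hypothesis is almost certainly a typo for $(\Delta+1)(|J|-k)$. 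Your argument is correct for that intended statement, but as written you have silently identified two quantities that differ by $(\Delta+1)(n-|J|)$; you should flag the discrepancy rather than absorb it.
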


Finally, we have to choose the cardinality of the set $J\subset\{1,\hdots,n\}$. Then, the set $J$ is chosen randomly with this fixed cardinality. If the cardinality of $J$ is larger, this leads to more erasures for $\mathcal{C}_t$ to correct. But in turn if the cardinality of $J$ is smaller, this leads to more erasures for $\mathcal{C}$ to correct.
To balance this somehow, we want to determine $|J|$ such that the number of erasures one could correct in positions $\{1+(t-1)n,\hdots,n+(t-1)n\}\setminus\{J+(t-1)n\}$ is approximately the same as the number of erasures one could correct in positions $\{1+(t-1)n,\hdots,n+(t-1)n\}\cap\{J+(t-1)n\}$. We denote this number of erasures by $n_t$. This approach leads to the following equations:
\begin{align}
n_t&\leq n-k(\min\{\mu,t\}+1)-|J|\quad \text{and}\\
n_t&\leq n-k-(n-|J|) =|J|-k. \label{ermdp}
\end{align}
This implies
\begin{align}
n_t-k\leq|J|\leq  n-k(\min\{\mu,t\}+1)-n_t
\end{align}
and consequently,
\begin{align}
n_t\leq\frac{1}{2}(n-k(\min\{\mu,t\}+2)).
\end{align}
Having equality in this last equation, implies $|J|=\frac{1}{2}(n-k\min\{\mu,t\})$. However, we need $|J|$ to be an integer and independent of $t$. As depending on the erasure pattern, the MDP convolutional code $\mathcal{C}$ might be able to correct more erasures than \eqref{ermdp} indicates, we propose to rather choose $|J|$ smaller,
which finally leads to
\begin{equation}\label{J}
|J|=\lfloor \frac{1}{2}(n-k\mu)\rfloor.
\end{equation}

Of course, depending on the erasures that occur during transmission, other choices for $|J|$ could lead to a better performance. However, as we do not know the erasure pattern before transmission and we have to choose $J$ before, we cannot adapt $J$ corresponding to the erasure pattern but have to choose it in a way that the numbers of channel erasures our codes $\mathcal{C}_t$ and $\mathcal{C}$  are able to tolerate are balanced.




Note that we can correct more erasures in $r_t^{i}$ if $t$ is small (as the code $\mathcal{C}_t$ has a larger minimum distance if $t$ is small). This means that we could tolerate slightly more erasures at the beginning of the stream than at the end.

In the following, we illustrate the erasure correcting capability of our scheme with the help of two examples.

\begin{example}
	Let $n=6$, $k=1$ and $\mu=2$. This implies $\delta=2$ and $L=2$, i.e. $\mathcal{C}$ is an $(6,1,2)$ MDP convolutional code that could recover all erasures patterns for which in each sliding window of size $18$ there are not more than $15$ erasures. We assume $\Delta=L$. Moreover, according to equation \eqref{J}, we have $|J|=2$. We  illustrate one window of the response sequence $(r^{i}_t)_{t\in\mathbb N}$ in the following figure, where the squares with content $j$ denote the positions of the set $J$:\\
\begin{center}
	\begin{tabular}{|  c| c | c  | c  |c |c | c |c |c|c|c|c|c|c|c|c|c|c|}
		\hline
		&  &  &  &j & j &  &  & &  & j & j & &  &  &  &j & j\\
		\hline
	\end{tabular}\\
\end{center}	
	
\vspace{.5cm}
	
	According to Theorem \ref{er} we are able to recover $2$ erasures in the first $4$ positions with erasure decoding in $\mathcal{C}_1$. Moreover, $\mathcal{C}_2$ and  $\mathcal{C}_3$ are both able to correct $1$ additional erasure. Finally, the convolutional code $\mathcal{C}$ is able to correct $3$ erasures in the positions in which we have a $j$. To count the total number of erasures as well as the number of erasure patterns that can be corrected (assuming that erasures occur independently of each other), we have to distinguish two cases.
	
	For the first case, we assume that the erasure pattern allows decoding with $\mathcal{C}_1$, $\mathcal{C}_2$ and $\mathcal{C}_3$. Hence, we are able to correct up to $7$ erasures in $18$ positions. Moreover, if we assume that the erasures occur independently of each other, we could correct $\left(\sum_{i=0}^{2}\binom{4}{i}\right)\cdot 5\cdot 5\cdot\left(\sum_{i=0}^{3}\binom{6}{i}\right)=10175$ different erasure patterns.
	
	For the second case, we assume that the erasure pattern is such that there exists $t\in\{1,2,3\}$ such that decoding with $\mathcal{C}_t$ is not possible, i.e. the $t$-th window of size $n=6$ has to be considered as completely lost for $\mathcal{C}$. In order that recovery is still possible, decoding with $\mathcal{C}_s$ for $s\neq t$ has to be possible and only one additional erasure in the positions in $J$ outside the completely erased window can be tolerated. Thus, for $t=1$ the maximal number of erasures that can be corrected is $9$ and the number of correctable erasure patterns equals $625$. For $t\neq 1$, the maximum number of erasures that can be corrected is $10$ and the number of correctable erasure patterns equals $2750$.
	
	Summimg up, considering all cases, one gets that there are $13550$ erasure patterns that we can correct.

	If one would choose $|J|=1$, correction is not possible anymore if one complete window of size $n$ is lost. We would still be able to correct $7$ erasures but all these erasures have to be in positions

	$$
\bigcup_{t=1}^{L+1}\left(\{1+(t-1)n,\hdots,n+(t-1)n\}\setminus\{J+(t-1)n\}\right)
$$
whereas no erasures in positions
 $$
 \bigcup_{t=1}^{L+1}\left(\{1+(t-1)n,\hdots,n+(t-1)n\}\cap\{J+(t-1)n\}\right),
 $$
could be corrected. Counting the number of erasure patterns that we are able to correct under the assumption of independent erasures, we get $6656$.

	If one would choose $|J|=3$, there are three cases to distinguish. For the first case, assume that no window of size $n$ is completely lost for recovery with $\mathcal{C}$. Then, we could again correct $7$ erasures but only $1$ of these erasures can have a position in
	$$
\bigcup_{t=1}^{L+1}\left(\{1+(t-1)n,\hdots,n+(t-1)n\}\setminus\{J+(t-1)n\}\right).
$$
The number of erasure patterns that could be corrected is $1864$.

 For the second case, assume that correction with $\mathcal{C}_t$ is not possible for exactly one $t\in\{1,2,3\}$. For $t=1$, one could correct up to $9$ erasures and $168$ erasure patterns, for $t\neq 1$, up to $10$ erasures and $2352$ erasure patterns.

 For the third case, assume that correction with $\mathcal{C}_{t}$ is not possible for exactly two values $t\in\{1,2,3\}$, denoted by $t_1$ and $t_2$. If $1\in\{t_1,t_2\}$, one could correct up to $12$ erasures and $56$ erasure patterns, for $1\notin\{t_1,t_2\}$, up to $13$ erasures and $196$ erasure patterns.

 Hence the total number of erasure patterns that could be corrected is $4636$.
This illustrates that our choice of $J$ is optimal if we assume the erasures to occur independently of each other.

	Finally, we want to consider, how many erasures we can correct in a larger window and choose a window of size 24, which is illustrated as follows:\\
	\begin{center}
	\begin{tabular}{|  c| c | c  | c  |c |c | c |c |c|c|c|c|c|c|c|c|c|c|c|c|c|c|c|c|}
		\hline
		&  &  &  &j & j &  &  & &  & j & j & &  &  &  &j & j& &  &  &  &j & j\\
		\hline
	\end{tabular}
	\end{center}	
	
\vspace{.5cm}

	According to Theorem \ref{er} we are able to recover $2$ erasures in the first $4$ positions with erasure decoding in $\mathcal{C}_1$ and $3$ additional erasures with $\mathcal{C}_t$ for $t\geq 2$. The convolutional code $\mathcal{C}$ is able to correct up to $5$ erasures in the positions with $j$. Under the assumption that decoding with $\mathcal{C}_t$ is possible for $t=1,\hdots,4$, we are able to correct up to $10$ erasures. If decoding is not possible for exactly one $t$, one can correct up to $12$ erasures if $t=1$ and up to $13$ erasures if $t\neq 1$. If decoding is not possible for exactly two of the star product codes, recovery is only possible if this happens for $t=1$ and $t=4$, in which case up to $15$ erasures can be corrected.
	
	If one would choose $|J|=1$, we would only be able to correct $7$ erasures and all these erasures have to be in positions in
$$
\bigcup_{t=1}^{L+1}\left(\{1+(t-1)n,\hdots,n+(t-1)n\}\setminus\{J+(t-1)n\}\right).
$$
	If one would choose $|J|=3$, one has to distinguish four cases. Under the assumption that decoding with $\mathcal{C}_t$ is possible for $t=1,\hdots,4$, we are able to correct up to $10$ erasures in total but only $1$ of these erasures could be in
$$
\bigcup_{t=1}^{L+1}\left(\{1+(t-1)n,\hdots,n+(t-1)n\}\setminus\{J+(t-1)n\}\right).
$$
If decoding is not possible for exactly one $t$, one can correct up to $12$ erasures if $t=1$ and up to $13$ erasures if $t\neq 1$. If decoding is not possible for exactly two of the star product codes and $\mathcal{C}_1$ is among them, one could correct up to $15$ erasures and if $\mathcal{C}_1$ is not among them, one could correct up to $16$ erasures. If decoding is not possible for exactly three of the star product codes, one could correct up to $18$ erasures (but there are only two erasure patterns for this scenario).
\end{example}

\begin{example}
	Let $n=10$, $k=2$ and $\mu=2$. This implies (if we use for $\mathcal{C}$ the construction presented in the next section where $G_{\mu}$ is full rank) $\delta=4$ and $L=2$, i.e. $\mathcal{C}$ is an $(10,2,4)$ MDP convolutional  code that can recover all erasures patterns for which in each sliding window of size $30$ there are not more than $24$ erasures. We assume $\Delta=L$. Moreover, according to equation \eqref{J}, we have $|J|=3$.
	
	
	According to Theorem \ref{er} we are able to recover $3$ erasures in the first $7$ positions of the response sequence $(r^{i}_t)_{t\in\mathbb N}$ with erasure decoding in $\mathcal{C}_1$. Moreover, $\mathcal{C}_2$ and  $\mathcal{C}_3$ are both able to correct $1$ additional erasure. Finally, the convolutional code $\mathcal{C}$ is able to correct $3$ erasures in the positions covered by one of the sets $\{J+(t-1)n\}$. In total, we are able to correct $8$ erasures in $30$ positions in the case that correction with all $\mathcal{C}_t$ is possible, up to $12$ erasures in the case that (only) the first window of size $n=10$ is lost completely and up to $14$ erasures in the case that another window of size $n$ is erased completely.
	
	If one would choose $|J|=2$, we would be able to correct $8$ erasures but all these erasures have to be in positions in
	$$
\bigcup_{t=1}^{L+1}\left(\{1+(t-1)n,\hdots,n+(t-1)n\}\setminus\{J+(t-1)n\}\right)
$$
whereas no erasures in
$$
\bigcup_{t=1}^{L+1}\left(\{1+(t-1)n,\hdots,n+(t-1)n\}\cap\{J+(t-1)n\}\right)
$$
could be corrected.

If one would choose $|J|=4$, we can again correct $8$ erasures in the case that correction with all $\mathcal{C}_t$ is possible but only $2$ of these erasures can have a position in
	$$
\bigcup_{t=1}^{L+1}\left(\{1+(t-1)n,\hdots,n+(t-1)n\}\setminus\{J+(t-1)n\}\right).
$$
Moreover, we could correct up to $12$ erasures in the case that the first window of size $n=10$ is lost completely and up to $14$ erasures in the case that another window of size $n$ is erased completely.

	Again our choice of $J$ is optimal if we assume the erasures to occur independently of each other.
\end{example}

\begin{remark}
The major advantage of using convolutional codes instead of block codes is that the symbols in different windows of size $n$ are dependent on each other and hence erasures cannot only be recovered with the help of the received symbols in the same window but also with the help of received symbols of other windows. This is illustrated also by the previous examples where recovery is possible if all symbols with positions in $J$ are erased if not too many symbols with positions in $\{J+n\}$ and $\{J+2n\}$ are erased.
This is due to the fact that there are erasure patterns where all symbols of the first window of size $n$ are erased but recovery with a convolutional code is still possible. But of course. this can never be possible using block codes since in this case all windows of size $n$ have to be decoded independently of each other.
\end{remark}

\section{Construction of suitable streaming codes}

The aim of this section is to provide constructions for $(n,k,\delta)$ MDP convolutional codes $\mathcal{C}$, which have the additional property that, for $f=1,\hdots,\mu$, $\mathcal{C}_f$ is an $[n,(f+1)k]$ MDS block code, as proposed at the beginning of the previous section. To this end, we will use the following lemma and proposition.

\begin{lemma}\label{lem:MDSblock}\cite{ma77}
	Let $\C$
	be an $[n,k]$ block code with generator matrix $G$. Then, $\C$ is MDS if, and only if, all $k \times k$ full size minors of $G$ are nonzero.
\end{lemma}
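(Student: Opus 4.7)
The plan is to translate the MDS property into a statement about linear dependencies among columns of $G$ and then observe that, for subsets of $k$ columns, a linear dependence is equivalent to the vanishing of the corresponding $k\times k$ minor.

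First I would recall the equivalent reformulation of MDS: $\C$ is MDS precisely when its minimum distance equals $n-k+1$, i.e., every nonzero codeword has at most $k-1$ zero coordinates. Any codeword has the form $\uu G$ for some $\uu\in\mathbb F^k$, and its zero coordinates are contained in a set $S\subseteq\{1,\dots,n\}$ if and only if $\uu G_S=\mathbf 0$, where $G_S$ denotes the $k\times|S|$ submatrix of $G$ built from the columns indexed by $S$. Hence the existence of a nonzero codeword vanishing on every position of $S$ is equivalent to $\operatorname{rank}(G_S)<k$.

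Next I would specialise to $|S|=k$: here $G_S$ is square, so $\operatorname{rank}(G_S)<k$ if and only if $\det(G_S)=0$. Therefore a nonzero codeword with at least $k$ zero coordinates exists exactly when some $k\times k$ minor of $G$ vanishes, and ruling out such codewords is the same as ruling out vanishing $k\times k$ minors.

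Finally I would combine the two directions. If every $k\times k$ minor of $G$ is nonzero, then no nonzero codeword has $k$ or more zero entries, so the minimum weight is at least $n-k+1$; together with the Singleton bound this forces equality and hence MDS. Conversely, if $\C$ is MDS, no nonzero codeword has $k$ or more zeros, so every $k\times k$ submatrix of $G$ must be invertible. The only mild subtlety is keeping the quantifiers straight when passing between ``there is a nonzero $\uu$ in the left kernel of $G_S$'' and ``some $k\times k$ submatrix of $G$ is singular''; beyond this bookkeeping the argument is entirely routine, which is why the result is a classical exercise.
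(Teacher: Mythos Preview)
The paper does not prove this lemma at all; it is quoted as a classical fact with a citation to \cite{ma77}. Your argument is the standard textbook proof and is correct in substance. One small wording slip: you write that a codeword's ``zero coordinates are contained in a set $S$ if and only if $\uu G_S=\mathbf 0$,'' but $\uu G_S=\mathbf 0$ means $S$ is contained in the zero set, not the reverse. Since the sentence that follows (``the existence of a nonzero codeword vanishing on every position of $S$ is equivalent to $\operatorname{rank}(G_S)<k$'') is stated correctly and is what you actually use, the slip is cosmetic and the proof goes through.
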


\begin{proposition}\cite[Theorem 3.3]{al16}\label{prop}	Let $\alpha$ be a primitive element of a finite field $\mathbb F=\mathbb F_{p^N}$ and $B=[b_{i,l}]$ be a matrix over $\mathbb F$ with the following properties
	\begin{enumerate}
		\item if $b_{i,l}\neq 0$, then $b_{i,l}=\alpha^{\beta_{i,l}}$ for a positive integer $\beta_{i,l}$
		\item if $b_{i,l}=0$, then $b_{i',l}=0$ for any $i'>i$ or $b_{i,l'}=0$ for any $l'<l$
		\item if $l<l'$, $b_{i,l}\neq 0$ and $b_{i,l'}\neq 0$, then $2\beta_{i,l}\leq\beta_{i,l'}$
		\item if $i<i'$, $b_{i,l}\neq 0$ and $b_{i',l}\neq 0$, then $2\beta_{i,l}\leq\beta_{i',l}$.
	\end{enumerate}
	Suppose $N$ is greater than any exponent of $\alpha$ appearing as a nontrivial term of any minor of $B$. Then $B$ has the property that each of its minors which is not trivially zero is nonzero.
\end{proposition}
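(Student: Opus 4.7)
The plan is to expand the fixed non-trivially-zero $s\times s$ minor by the Leibniz formula and argue that, when written as an $\mathbb{F}_p$-linear combination of $1,\alpha,\alpha^2,\ldots,\alpha^{N-1}$, it has a nonzero coefficient. Let $\sigma$ range over bijections from the selected row indices to the selected column indices of the submatrix; the terms for which some chosen entry $b_{i,\sigma(i)}$ is zero drop out, and each surviving term has the form $\mathrm{sgn}(\sigma)\,\alpha^{e_\sigma}$ with $e_\sigma=\sum_i \beta_{i,\sigma(i)}$. By hypothesis, every $e_\sigma$ that actually appears satisfies $e_\sigma<N$, so the minor lies in the $\mathbb{F}_p$-span of $\{1,\alpha,\ldots,\alpha^{N-1}\}$, which is a basis of $\mathbb{F}_{p^N}$. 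The minor is therefore zero if and only if every basis coefficient vanishes, and our goal is to exhibit one that does not.

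First I would show that among the surviving permutations there is a \emph{unique} $\sigma^*$ maximizing $e_\sigma$. The monotone support hypothesis (2) ensures that the set of admissible $\sigma$ is combinatorially well-structured (a staircase of allowed positions), while the geometric-growth conditions (3) and (4) force any elementary transposition that moves from $\sigma^*$ to another admissible $\sigma$ to strictly decrease the exponent sum: if $\sigma^*$ pairs a later row with a later column, swapping these two assignments replaces two large exponents with two that are at least a factor of two smaller along the relevant axis, so the new exponent sum is strictly smaller. Iterating such swaps (by induction on the number of inversions separating $\sigma$ from $\sigma^*$) yields the strict inequality $e_\sigma<e_{\sigma^*}$ for every admissible $\sigma\neq\sigma^*$.

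Once uniqueness of $\sigma^*$ is established, the coefficient of $\alpha^{e_{\sigma^*}}$ in the Leibniz expansion is simply $\mathrm{sgn}(\sigma^*)=\pm 1$, which is nonzero in $\mathbb{F}_p$ irrespective of the characteristic. Because every other surviving term carries a strictly smaller exponent, no other term contributes to the basis vector $\alpha^{e_{\sigma^*}}$, so the $\mathbb{F}_p$-coefficient of $\alpha^{e_{\sigma^*}}$ in the minor is $\pm 1\neq 0$. Hence the minor is a nonzero element of $\mathbb{F}_{p^N}$.

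The main obstacle is the uniqueness-of-maximum step. Conditions (3) and (4) are local, stated between two entries in the same row or column, whereas maximization runs over permutations, which is a global object. Bridging the two requires a careful combinatorial argument that identifies the greedy permutation (pairing each admissible row with its largest feasible column inside the staircase fixed by condition (2)) as the unique maximizer, and then shows that any elementary swap reducing the number of inversions relative to $\sigma^*$ strictly decreases $e_\sigma$ by virtue of the doubling inequalities. I expect this combinatorial lemma to be the heart of the proof, with the field-theoretic basis-expansion argument at the end being essentially an immediate consequence.
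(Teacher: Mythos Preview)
The paper does not prove this proposition at all: it is quoted verbatim from \cite{al16} (Theorem~3.3 there) and used as a black box in the proof of the subsequent construction theorem. So there is no ``paper's own proof'' to compare against.

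That said, your outline is the right one and matches the argument in the original reference. The Leibniz expansion together with the fact that $\{1,\alpha,\ldots,\alpha^{N-1}\}$ is an $\mathbb{F}_p$-basis reduces the problem to showing that a single exponent $e_{\sigma^*}$ is attained exactly once, and the doubling hypotheses (3)--(4) are precisely what makes this work. The cleanest way to carry out the uniqueness step is not via inversion-swaps (where you have to worry about staying inside the admissible set determined by condition~(2)) but by a direct domination argument: the largest exponent $\beta_{i_0,l_0}$ occurring in the chosen submatrix satisfies $\beta_{i_0,l_0}\ge 2\beta_{i,l}$ for every other entry in row $i_0$ or column $l_0$, and iterating this shows $\beta_{i_0,l_0}$ alone already exceeds the total contribution of any admissible $\sigma$ that avoids position $(i_0,l_0)$; then recurse on the complementary $(s-1)\times(s-1)$ minor. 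This sidesteps the issue you flagged about elementary transpositions possibly leaving the staircase of nonzero entries. Either route lands on the same conclusion, and your identification of the uniqueness-of-maximum step as the crux is accurate.
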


The following theorem gives the desired construction.

\begin{theorem}
	Let $p$ be prime, $N\in\mathbb N$ and $\alpha$ be a primitive element of $\mathbb F_{p^N}$. For $i=1,\hdots,\mu$, set
	\begin{align}
	G_i:=\left[\begin{array}{ccc} \alpha^{2^{in}} & \cdots &  \alpha^{2^{(i+1)n-1}}\\
	\vdots & & \vdots\\
	\alpha^{2^{in+k-1}} & \cdots &  \alpha^{2^{(i+1)n+k-2}} \end{array}\right].
	\end{align}
	Then, the convolutional code $\mathcal{C}$ with generator matrix $G(z)=\sum_{i=0}^{\mu}G_iz^{i}$ is an MDP convolutional code and moreover, for $0\leq t\leq \mu$, $\begin{pmatrix}G_0\\ \vdots\\ G_t\end{pmatrix}$ is the generator matrix of an MDS block code if $N>\max\{2^{n(L+2)-1}, 2^{(\mu+1)n+k-1}\}$.
\end{theorem}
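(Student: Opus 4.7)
The plan is to apply Proposition \ref{prop} to two related matrices: directly to the stacked generator $H_t := \begin{pmatrix}G_0\\ \vdots\\ G_t\end{pmatrix}$ for the MDS claims, and, after a block-row reversal, to the sliding matrix $G_L^c$ combined with Theorem \ref{slid} for the MDP claim.

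For the MDS part, fix $t\in\{0,\ldots,\mu\}$. All entries of $H_t$ are nonzero powers of $\alpha$: the entry at block row $i$, row-offset $r$, column $c$ is $\alpha^{2^{in+r+c-2}}$. The monotonicity needed by Proposition \ref{prop} is transparent: moving right by one column adds $1$ to the inner exponent; moving down within a block of rows adds $1$; moving from block row $i$ to block row $i+1$ adds $n-(k-1)\geq 2$, using $n\geq(\mu+2)k$. Conditions 1, 3, 4 follow immediately, condition 2 is vacuous since there are no zero entries, and every $(t+1)k\times(t+1)k$ minor carries a nontrivial Leibniz term. Hence every such minor is nonzero, and the MDS claim follows from Lemma \ref{lem:MDSblock}.

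For the MDP part, Theorem \ref{slid} reduces the goal to nonvanishing of every $(L+1)k\times(L+1)k$ full-size minor of $G_L^c$ selected by the stated column-index condition. A direct application of Proposition \ref{prop} to $G_L^c$ fails, because going down a column switches $G_j$ to $G_{j-1}$ so the inner exponents shrink. I would therefore work with $\tilde G_L^c$, obtained by reversing the block-row order so that block row $s$ of $\tilde G_L^c$ equals block row $L-s$ of $G_L^c$; this row permutation changes each minor by at most a sign. In $\tilde G_L^c$, block row $s$ is nonzero exactly in block columns $L-s,\ldots,L-s+\mu$ (clipped to $\{0,\ldots,L\}$), and the inner exponent $(j-L+s)n+r+c-2$ now strictly increases both to the right and downward, so conditions 1, 3, 4 of Proposition \ref{prop} follow as in the MDS step. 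Condition 2 requires a case split: a \emph{leading} zero in block row $s$ (block column $<L-s$) has all of the row to its left equal to zero, while a \emph{trailing} zero (block column $>L-s+\mu$) has all of the column below it equal to zero, since block rows $s'>s$ have strictly smaller nonzero windows shifted to the left. With conditions 1--4 verified, Proposition \ref{prop} yields nonvanishing of every non-trivially-zero minor of $\tilde G_L^c$. It remains to confirm that each minor singled out by Theorem \ref{slid} admits at least one nonzero Leibniz term, which follows from a short Hall's-theorem / bipartite-matching argument directly using the stated column-index inequality on the $t_{ik+1}$'s. Finally, the hypothesis $N>\max\{2^{n(L+2)-1},2^{(\mu+1)n+k-1}\}$ is exactly what is needed so that every $\alpha$-exponent arising as a nontrivial Leibniz term in any of the minors in play is smaller than $N$, as required by Proposition \ref{prop}.

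The main obstacle I expect is condition 2 of Proposition \ref{prop} for $\tilde G_L^c$: because $G_j=0$ for $j>\mu$, the zero region is two-sided rather than forming a clean one-sided staircase, so the disjunction in condition 2 must be checked separately for leading and trailing zeros, using the key observation that deeper block rows of $\tilde G_L^c$ have strictly narrower nonzero windows. A secondary but routine task is the matching argument ensuring each MDP-relevant minor of $G_L^c$ has a nonvanishing Leibniz term.
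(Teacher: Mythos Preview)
Your approach is essentially the paper's: both reverse the block-row order of $G_L^c$ (the paper phrases this as equating the full-size minors of the lower block-triangular matrix and its block-column reversal, then defers the exponent bookkeeping to \cite{al16}) so that Proposition~\ref{prop} applies, and then invoke Theorem~\ref{slid} for MDP and Lemma~\ref{lem:MDSblock} for MDS. One remark on your flagged obstacle: under the paper's standing assumption $(\mu+2)k\leq n$ the construction has $\delta=k\mu$ and hence $L=\mu$, so no block $G_j$ with $j>\mu$ ever appears in $G_L^c$; after block-row reversal all zero entries are leading zeros, and your trailing-zero branch of the condition-2 check is never actually needed here.
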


\begin{proof}
Obviously, the fullsize minors of $\left[\begin{array}{ccc} G_0 & & \\ \vdots & \ddots & \\ G_L & \cdots & G_0\end{array}\right]$ and $\left[\begin{array}{ccc} & & G_0\\ & \text{\reflectbox{$\ddots$}} & \vdots\\ G_0 & \cdots & G_L\end{array}\right]$  are equal. Thus, it follows from Theorem \ref{slid} and Proposition \ref{prop} that $\mathcal{C}$ is an MDP convolutional code if $N>2^{n(L+2)-1}$ (for the bound on $N$ also see Theorem 3.2 of \cite{al16}). Moreover, it follows from Lemma \ref{lem:MDSblock} and Proposition \ref{prop} that $\begin{pmatrix}G_0\\ \vdots\\ G_t\end{pmatrix}$ for $0\leq t\leq \mu$ are generator matrices of MDS block codes if $N>2^{(\mu+1)n+k-1}>\sum_{j=(\mu+1)n-1}^{(\mu+1)n+k-2}2^j$.
\end{proof}

\section{Conclusion}


We have studied the problem of private streaming of a sequence of files having the resilience against unresponsive servers the primary metric for judging the efficiency of a PIR scheme. We proposed for the first time a general scheme for such a problem. This scheme is based on MDP convolutional codes and the star product of codes.
It suits for a context where some servers fail to respond in contrast to other solutions considered in the literature where all the servers were assumed to fail at the same time instant. The approach presented can retrieve files in a sequential fashion and therefore is optimal for low-delay streaming applications. Some examples were presented to show how to take advantage of the proposed scheme.
We derived a large set of erasure patterns that our codes can recover. Concrete constructions of such codes exist although large field sizes are required. The construction of optimal codes for PIR over small fields that can deal with both burst and isolated erasures/errors is an interesting open problem that requires further research.

\section*{Acknowledgment}

The work of the first and third author was supported by the Portuguese Foundation for Science and Technology (FCT-Funda\c{c}\~{a}o para a Ci\^{e}ncia e a Tecnologia), through CIDMA - Center for Research and Development in Mathematics and Applications, within project UID/MAT/04106/2019. The first author was supported by the German Research Foundation within grant
LI 3101/1-1. The second author was partially supported by Spanish grant AICO/2017/128 of the Generalitat Valenciana and the University of Alicante under the project VIGROB-287.

\bibliographystyle{plain}
\bibliography{biblio_com_tudo}

\end{document}